\newtheorem{theorem}{Theorem}[section]
\newtheorem{corollary}{Corollary}
\newtheorem{lemma}[theorem]{Lemma}
\newtheorem{proposition}{Proposition}
\theoremstyle{definition}
\newtheorem{definition}[theorem]{Definition}
\newtheorem{example}{Example}
\newcommand{\Cc}{\mathcal{C}}
\newcommand{\Ic}{\mathcal{I}}
\newcommand{\Oc}{\mathcal{O}}
\newcommand{\FF}{\mathbb{F}}
\newcommand{\QQ}{\mathbb{Q}}
\newcommand{\RR}{\mathbb{R}}
\newcommand{\ZZ}{\mathbb{Z}}
\newcommand{\pf}{\mathfrak{p}}
\title[Skew Polynomial Codes from Quotients of Division Algebras]{On Skew Polynomial Codes and Lattices from Quotients of Cyclic Division Algebras}
\author[J\'er\^ome Ducoat and Fr\'ed\'erique Oggier]{J\'er\^ome Ducoat and Fr\'ed\'erique Oggier}
\address{Division of Mathematical Sciences, School of Physical and Mathematical Sciences, \newline\indent Nanyang Technological University, Singapore.} 
\subjclass{Primary 11S45;  % Algebraic number theory, algebras and orders 
 Secondary 11T71, % algebraic coding theory
94B40} % arithmetic codes.}
 \keywords{Constacyclic codes, Division algebras, Skew polynomial rings, Space-time codes.}
 \email{jerome.ducoat@gmail.com}
 \email{frederique@ntu.edu.sg}
\thanks{
The research of J. Ducoat and F. Oggier was supported by the Singapore National Research Foundation under Research Grant NRF-RF2009-07.
}
\begin{document}
\maketitle
\begin{abstract}
We propose a variation of Construction A of lattices from linear codes defined using the quotient $\Lambda/\pf\Lambda$ of some order $\Lambda$ inside a cyclic division $F$-algebra, for $\pf$ a prime ideal of a number field $F$. To obtain codes over this quotient, we first give an isomorphism between $\Lambda/\pf\Lambda$ and a ring of skew polynomials.
We then discuss definitions and basic properties of skew polynomial codes, which are needed for Construction A, but also explore further properties of the dual of such codes. We conclude by providing an application to space-time coding, which is the original motivation to consider cyclic division $F$-algebras as a starting point for this variation of Construction A.
\end{abstract}

%***********************************************************************************%
%
%
%
%***********************************************************************************%
\section{Introduction}
Connections between Euclidean lattices and linear codes over finite alphabets have been classically studied, through a series of constructions referred to as Constructions A,B,C,D or E \cite{splag}. Given the finite field $\FF_2$, the original (binary) Construction A \cite{Forney} considers the map $\rho :\ZZ^N \rightarrow \FF_2^N$ of reduction modulo 2 componentwise, for $N$ some positive integer. If $C$ is a binary linear code of length $N$, then $\rho^{-1}(C)$ (or its normalized version $\frac{1}{\sqrt 2}\rho^{-1}(C)$) is a lattice. A series of duality follows, such as a correspondence between the dual code $C^\perp$ of $C$ and the dual lattice $L^*$ of $L$, self-dual codes and unimodular lattices, or between the weight enumerator of the
code $C$ and the theta series of the lattice $L$. Construction A has been generalized in many ways, including, for citing a few examples, to: (1) cyclotomic fields $\QQ(\zeta_p)$ \cite{Ebeling}, where $\zeta_p$ is a primitive $p$th root of unity, and $p$ is prime, in which case $\rho$ becomes the componentwise reduction modulo the prime ideal $(1 - \zeta_p)$ of a vector with coefficients in the ring of integers $\ZZ[\zeta_p]$, (2) to quadratic and imaginary fields and totally definite quaternion algebras over $\QQ$ as a way to construct modular lattices in \cite{Bachoc}, (3) to the construction of unimodular lattices using quaternary quadratic residue codes \cite{BSC}.

Construction A is also of interest from a communication point of view, since it provides a concrete way of doing coset coding \cite{Forney}, an encoding technique which is very useful to encode lattices, since it provides a natural way of mapping elements from finite alphabets (codewords) to real or complex symbols (lattice points).

This paper addresses a variation of Construction A in the context of division algebras, and cyclic division algebras over a cyclic number field extension $K/F$ in particular. Denote by $\Oc_F$ and $\Oc_K$ their respective rings of integers, and by $\sigma$ the generator of the Galois group of $K/F$. Instead of using as a starting point
the quotient of a number field by an ideal (as in \cite{Ebeling} for example), we consider a specific $\Oc_F$-order $\Lambda$ in a cyclic division $F$-algebra, and the quotient $\Lambda/\pf\Lambda$ for $\pf$ a prime ideal of $\Oc_F$. 

As shown in Section \ref{sec:quot}, the quotient $\Lambda/\pf\Lambda$ turns out to be isomorphic to the ring of skew polynomials $(\Oc_K/\pf\Oc_K)[x; \sigma]/(x^n - u)$, where $u$ is determined by the choice of the cyclic division $F$-algebra. To mimic the steps involved in Construction A, we next consider the problem of designing codes over $(\Oc_K/\pf\Oc_K)[x; \sigma]/(x^n - u)$ in Section \ref{sec:code}, where a variation of Construction A is given. Let $\FF_q$ denote the finite field with $q$ elements, $q$ a prime power. Codes over the skew polynomial ring $\FF_q[x; \sigma]/(f(x))$, for $(f(x))$ a two sided ideal of $\FF_q[x; \sigma]$, were introduced in \cite{Boucher} and furthermore studied in the context of cyclic codes in \cite{BGU}. A generalization to Galois rings was obtained in \cite{BUS}. We propose some basic definitions and properties of codes over the skew polynomial ring $(\Oc_K/\pf\Oc_K)[x;\sigma]/(x^n - u)$, by taking inspiration from \cite{BUS} and generalizing some of the known results for the ring $\FF_q[x;\sigma]/(x^n - u)$. We note that our results could be easily generalized to the case $(\Oc_K/\pf\Oc_K)[x;\sigma]/(f(x))$, which is not done in this paper because of our focus on cyclic division algebras.

Once equipped with suitable definitions for skew polynomial codes over the ring $(\Oc_K/\pf\Oc_K)[x;\sigma]/(x^n - u)$, it is natural to wonder about the dual of such codes.  
This is studied in Section \ref{sec:dualcode}.  

We conclude by going back to the original motivation to consider this variation of Construction A over a cyclic division $F$-algebra, namely its application to space-time coding, or more precisely, to derive a way to perform coset encoding, as detailed in Section \ref{sec:matrix}.

%***********************************************************************************%
%
%
%
%***********************************************************************************%
\section{Quotients of Cyclic Division Algebras}
\label{sec:quot}

Let $K/F$ be a number field extension of degree $n$ with cyclic Galois group $\langle\sigma \rangle$, and respective rings of integers $\Oc_K$ and $\Oc_F$.
Consider the cyclic $F$-algebra $A$ defined by
\[
K \oplus Ke \oplus \cdots Ke^{n-1}
\]
where $e^n=u \in F$, and $ek=\sigma(k)e$ for $k\in K$. 
We assume that $u^i$, $i=0,\ldots,n-1$, are not norms in $K/F$ so that the algebra is division.

For $S$ a Noetherian integral domain with quotient field $F$, and $A$ a finite dimensional $F$-algebra, an $S$-order in $A$ is a subring $\Lambda$ of $A$, having the same identity as $A$, and such that $\Lambda$ is a finitely generated module over $S$ and generates $A$ as a linear space over $F$. An order $\Lambda$ is called maximal if it is not properly contained in any other $S$-order.

Then
\[
\Lambda=\Oc_K \oplus \Oc_Ke\oplus \cdots \oplus \Oc_Ke^{n-1}
\]
is an $\Oc_F$-order of $A$, which is typically not maximal.

Let $\pf$ be a prime ideal of $\Oc_F$ so that $\pf\Lambda$ is a two-sided ideal of $\Lambda$, 
and $\Oc_F/\pf\Oc_F$ is the finite field $\FF_{p^{f}}$, where $p$ is the prime number lying below $\pf$ and $f$ is the inertial degree of $\pf$ above $p$.
Since $\Lambda$ is an $\Oc_F$-module, we have the following ring homomorphism :
\[ 
\Oc_F \rightarrow \Lambda \rightarrow \Lambda / \mathfrak p\Lambda
\]  
and the image of $\mathfrak p =\mathfrak p\Lambda \cap \Oc_F$ obviously lies in $\mathfrak p \Lambda$. Hence, it yields a ring homomorphism $\FF_{p^f}=\Oc_F/\mathfrak p \rightarrow \Lambda/\mathfrak p\Lambda$, which means that $\Lambda/\pf\Lambda$ is an $\FF_{p^f}$-algebra.

From \cite{OS}, we have the following $\FF_{p^f}$-algebra isomorphism :
\[
\Lambda/\pf\Lambda \simeq (\Oc_K/\pf\Oc_K)\oplus(\Oc_K/\pf\Oc_K)e\oplus \cdots \oplus (\Oc_K/\pf\Oc_K)e^{n-1},
\]
where $e(k+\pf\Oc_K)=(\sigma(k)+\pf\Oc_K)e$ for all $k \in \Oc_K$ and $e^n=u+\pf$.

Indeed, since $\mathfrak p$ is a prime ideal of $\Oc_F$ and $\sigma_{|F}=\textrm{Id}_F$, we have $\sigma(\mathfrak p\Oc_K)\subset \mathfrak p \Oc_K$, which means that $\sigma$ can be factorized as a ring homomorphism by the natural projection $\pi:\Oc_K \rightarrow \Oc_K/\mathfrak p\Oc_K$ : more explicitely, there exists a ring homomorphism $\overline{\sigma}:\Oc_K/\mathfrak p\Oc_K \rightarrow \Oc_K/\mathfrak p\Oc_K$ such that $\sigma=\overline{\sigma}\circ \pi$.
By a slight abuse of notation, we keep the notation $\sigma$ for the map $\overline{\sigma}$. 

Since $\Oc_K$ is a Dedekind domain, $\pf\Oc_K=\pf_1^{e_1}\pf_2^{e_2}\cdots\pf_g^{e_g}$ with $\pf_i$ a prime ideal of $\Oc_K$ and $e_i\geq 0$ for $i=1,\ldots,g$.

In the particular case where $\pf$ is inert in $K/F$, $\pf\Oc_K$ is a prime ideal of $\Oc_K$. Then the finite ring $\Oc_K/\pf\Oc_K$ is an integral domain, so it is the finite field $\FF_{p^{nf}}$. The induced ring homomorphism $\sigma:\FF_{p^{nf}}=\Oc_K/\pf\Oc_K \rightarrow \Oc_K/\pf\Oc_K=\FF_{p^{nf}}$ is thus a generator of the cyclic Galois group of $\FF_{p^{nf}}/\FF_{p^f}$. 

Given a ring $S$ with a group $G=\langle \sigma \rangle$ acting on it, the skew polynomial ring $S[x;\sigma]$ is the set of all polynomials $s_0+s_1x+\ldots+ s_mx^m$, $s_i \in S$, $m\geq 0$ with multiplication twisted by the relation $xs=\sigma(s)x$ for all $s \in S$.

We will consider the skew polynomial ring $(\Oc_K/\pf\Oc_K)[x;\sigma]$.
Note that since $u\in F$, $x^n-u$ belongs to the center of $(\Oc_K/\pf\Oc_K)[x;\sigma]$ and the ideal $(x^n-u)$ is two-sided.

\begin{lemma}
\label{l1}
There is an $\FF_{p^f}$-algebra isomorphism between $\Lambda/\pf\Lambda$ and the quotient of $(\Oc_K/\pf\Oc_K)[x;\sigma]$ by the two-sided ideal generated by $x^n-u$.
\end{lemma}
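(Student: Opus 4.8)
The plan is to reduce everything to the explicit $\FF_{p^f}$-algebra isomorphism of \cite{OS} recalled above, which identifies $\Lambda/\pf\Lambda$ with
\[
B:=(\Oc_K/\pf\Oc_K)\oplus(\Oc_K/\pf\Oc_K)e\oplus\cdots\oplus(\Oc_K/\pf\Oc_K)e^{n-1},
\]
where $e(k+\pf\Oc_K)=(\sigma(k)+\pf\Oc_K)e$ and $e^n=u+\pf$. Writing $R:=\Oc_K/\pf\Oc_K$, it then suffices to build an $\FF_{p^f}$-algebra isomorphism $R[x;\sigma]/(x^n-u)\simeq B$. First I would define $\phi\colon R[x;\sigma]\to B$ by $\phi\bigl(\sum_i\bar k_i x^i\bigr)=\sum_i\bar k_i e^i$, a power $e^i$ with $i\geq n$ being reduced by $e^n=u+\pf$. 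Additivity and $\phi(1)=1$ are immediate, and to check multiplicativity it is enough to compare the two sides on monomials $\bar k x^i$ and $\bar l x^j$: both $\phi\bigl((\bar k x^i)(\bar l x^j)\bigr)$ and $\phi(\bar k x^i)\,\phi(\bar l x^j)$ equal $\bar k\,\sigma^i(\bar l)\,e^{i+j}$, using $x\bar l=\sigma(\bar l)x$ on one side, $e\bar l=\sigma(\bar l)e$ on the other, and $\sigma^n=\mathrm{id}$ on $R$ for the reductions. So $\phi$ is a surjective $\FF_{p^f}$-algebra homomorphism, its image containing $1,e,\dots,e^{n-1}$, which span $B$. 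Since $x^n-u$ is central in $R[x;\sigma]$ and $\phi(x^n-u)=e^n-u=0$ in $B$, the two-sided ideal $(x^n-u)$ lies in $\ker\phi$, so $\phi$ descends to a surjective $\FF_{p^f}$-algebra homomorphism $\overline\phi\colon R[x;\sigma]/(x^n-u)\to B$.

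It remains to prove $\overline\phi$ injective, i.e.\ that $\ker\phi=(x^n-u)$. Because $x^n-u$ is monic, Euclidean division by it is possible in $R[x;\sigma]$, so every class in $R[x;\sigma]/(x^n-u)$ has a representative of degree $<n$; hence $1,x,\dots,x^{n-1}$ generate $R[x;\sigma]/(x^n-u)$ as a left $R$-module. They are moreover $R$-independent modulo $(x^n-u)$: if $g\in R[x;\sigma]$ is nonzero with leading term $\bar c x^d$, then the leading term of $(x^n-u)g$ is $\sigma^n(\bar c)x^{n+d}=\bar c x^{n+d}\neq0$ (using $\sigma^n=\mathrm{id}$ on $R$), so every nonzero element of $(x^n-u)$ has degree $\geq n$; in particular no nonzero polynomial of degree $<n$ lies in $(x^n-u)$. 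Thus $R[x;\sigma]/(x^n-u)$ is free of rank $n$ over $R$ with basis the classes of $1,x,\dots,x^{n-1}$, while $B$ is free of rank $n$ over $R$ with basis $1,e,\dots,e^{n-1}$, and the left $R$-linear map $\overline\phi$ sends one basis onto the other; therefore $\overline\phi$ is bijective. Composing with the isomorphism of \cite{OS} yields the claimed $\FF_{p^f}$-algebra isomorphism $\Lambda/\pf\Lambda\simeq R[x;\sigma]/(x^n-u)$.

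The hard part will be exactly this last step, the determination of $\ker\phi$ (equivalently, the freeness of rank $n$). One must be careful because $R=\Oc_K/\pf\Oc_K$ need not be an integral domain — it is a field only when $\pf$ is inert — so one cannot simply invoke additivity of degrees for arbitrary products. The argument sidesteps this by using only that $x^n-u$ is monic and that the induced endomorphism $\sigma$ of $R$ still satisfies $\sigma^n=\mathrm{id}$, which is precisely what rules out top-degree cancellation in $(x^n-u)g$. Everything else — well-definedness, the homomorphism property, surjectivity, and the inclusion $(x^n-u)\subseteq\ker\phi$ — is routine.
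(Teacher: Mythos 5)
Your proof is correct and follows essentially the same route as the paper: the evaluation map $x\mapsto e$ onto the decomposition $\bigoplus_i(\Oc_K/\pf\Oc_K)e^i$ from \cite{OS}, surjectivity, and identification of the kernel with $(x^n-u)$ via division by this monic central polynomial. Your only (harmless) variation is to establish $\ker\phi=(x^n-u)$ by showing every nonzero element of the ideal has degree $\geq n$ and matching free $R$-bases, where the paper instead reduces a kernel element to a remainder $h$ of degree $<n$ and reads off $h=0$ from $h(e)=0$ in the direct sum.
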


\begin{proof}
We define the map 
\[
\begin{aligned}
\varphi : (\Oc_K/\pf\Oc_K)[x;\sigma] &\rightarrow  \Lambda/\pf\Lambda \\ f(x) &\mapsto f(e).\end{aligned}
\]
Using the isomorphism given above and in \cite{OS}, it is easily seen that $\varphi$ is a surjective $\FF_{p^f}$-algebra homomorphism.
Moreover, the kernel of $\varphi$ is the two-sided ideal of $(\Oc_K/\pf\Oc_K)[x;\sigma]$ generated by $x^n-u$. Indeed, it is easily seen that $x^n-u$ lies in $\rm{ker}(\varphi)$. Conversely, let $f(x) \in \rm{\ker}(\varphi)$. We write 
\[
f(x)=\underset{i=0}{\overset{m}{\sum}} s_ix^i,~s_i\in \Oc_K/\pf\Oc_K,~i=0,...,m.
\] 
Then $f(e)=0$ in $\Lambda/\pf\Lambda$. 
The ring $(\Oc_K/\pf\Oc_K)[x;\sigma]$ is not always right or left Euclidean, as it would be if $\Oc_K/\pf\Oc_K$ were a finite field.
However, since $x^n-u$ is a monic polynomial (or more precisely since its leading coefficient is a unit in $\Oc_K/\pf\Oc_K$), we can still perform the long division algorithm: indeed, if $m \geq n$, we may perform a right division of $f(x)$ by $x^n-u$ as follows. Note that the polynomial
\[
f(x)-s_m x^{m-n}(x^n-u)=\underset{i=0}{\overset{m-1}{\sum}} s_ix^i+s_mux^{m-n}
\]
has degree smaller than that of $f$. This procedure of subtracting left multiple of $x^n-u$ can be repeated until we obtain a polynomial of degree smaller than $n$, thus, there exist some polynomials $g(x)$ and $h(x)$ such that 
\[
f(x)=g(x)(x^n-u)+h(x)
\] 
where $h(x)$ has degree $\leq n-1$. Hence, $f(e)=0$ is equivalent to $h(e)=0$. Yet, $0=h(e)=r_0+r_1e+\cdots +r_{n-1}e^{n-1}$ in $\Lambda/\pf\Lambda\simeq (\Oc_K/\pf\Oc_K)\oplus(\Oc_K/\pf\Oc_K)e\oplus \cdots \oplus (\Oc_K/\pf\Oc_K)e^{n-1}.$ Therefore, $r_0=r_1=\cdots = r_{n-1}=0$ and $h(x)=0$. We conclude that $f(x)$ is a (left) multiple of $x^n-u$. Consequently, ${\rm{ker}}(\varphi)=(x^n-u)$ and we get the desired isomorphism. 
\end{proof}

This lemma generalizes the isomorphism of \cite[Lemma 1]{castle} for $\pf$ inert.

Denote by $\psi$ the inverse isomorphism of the one given in Lemma \ref{l1}: 
\[
\psi: \Lambda/\pf\Lambda \cong (\Oc_K/\pf\Oc_K)[x;\sigma]/(x^n-u).
\]

Let $\Ic$ be a left ideal of $\Lambda$. Assume that $\Ic\cap \Oc_F\supset \pf$. Then $\Ic/\pf\Lambda$ is an ideal of $\Lambda/\pf\Lambda$. In the sequel, we will study the left ideal $\psi(\Ic/\pf\Lambda)$ of $(\Oc_K/\pf\Oc_K)[x;\sigma]/(x^n-u)$.

%**************************************************************************************%
%
%
%
%**************************************************************************************%

\section{Skew-Polynomial Codes and a Variation of Construction A}
\label{sec:code}

Classical cyclic codes over the finite field $\FF_q$ are ideals of the polynomial ring $\FF_q[x]/(x^n-1)$. Since this is a principal ideal domain, every ideal has a generator polynomial, which in turn defines a cyclic code, and codewords are then multiples of this generator polynomial.
In \cite{Boucher}, the definition of codes over polynomial rings was extended to that of codes over the skew-polynomial rings $\FF_q[x;\sigma]$. We are interested here in codes over the skew-polynomial rings $(\Oc_K/\pf\Oc_K)[x;\sigma]/(x^n-u)$.\\

%*************************************************************************************%
\subsection{Division in Skew-Polynomial Rings.}
A first major difference between $\FF_q[x;\sigma]$ and $(\Oc_K/\pf\Oc_K)[x;\sigma]$ is, as noted in the proof of Lemma \ref{l1}, that the skew polynomial ring $(\Oc_K/\pf\Oc_K)[x;\sigma]$ is not in general right or left Euclidean anymore. However, this can be sorted out, using the same technique as that explained in \cite{BUS}, which was also used for the proof of Lemma \ref{l1}.

Let $f(x)=\underset{i=0}{\overset{m}{\sum}} s_ix^i$ and $g(x)=\underset{i=0}{\overset{l}{\sum}} t_ix^i$ be two polynomials of respective degree $m$ and $l$, with coefficients in $(\Oc_K/\pf\Oc_K)$. 
Suppose that $t_l$ is invertible and $m>l$, then a right (respectively left) division of $f(x)$ by $g(x)$ is obtained as follows: consider the polynomial
\[
f(x)-\frac{s_m}{\sigma^{m-l}(t_l)}x^{m-l}g(x)
\]
respectively
\[
f(x)-g(x)\sigma^{-l}\left(\frac{s_m}{t_l}\right)x^{m-l}.
\]
Since $\sigma$ is an automorphism, $1=\sigma(t_lt_l^{-1})=\sigma(t_l)\sigma(t_l^{-1})$, which implies that $\sigma(t_l)$ is invertible. Hence, $\sigma^j(t_l)$ also is invertible for all $1\leq j\leq m$. Moreover, we have 
\[ \begin{aligned} f(x)-\frac{s_m}{\sigma^{m-l}(t_l)}x^{m-l}g(x) &= f(x)-s_m x\sigma^{m-l-1}(t_l^{-1}) x^{m-l} g(x) = \cdots \\ &= f(x)-s_m x^{m-l} \sigma^{m-l-(m-l)}(t_l^{-1}) g(x)\\ &=f(x)-s_m x^{m-l} t_l^{-1} (t_l x^{l} + \underset{i=0}{\overset{l-1}{\sum}} t_ix^i)\\ &=f(x)-s_m x^{m} + \underset{i=0}{\overset{l-1}{\sum}}  s_m t_l^{-1}t_ix^i.\end{aligned}
\]
Similarly, we have
\[ \begin{aligned} f(x)-g(x) \sigma^{-l}\left(\frac{s_m}{t_l}\right) x^{m-l} &= f(x)- \underset{i=0}{\overset{l}{\sum}} t_ix^i \sigma^{-l}\left(\frac{s_m}{t_l}\right) x^{m-l} \\ &= f(x)- \underset{i=1}{\overset{l}{\sum}} t_ix^{i-1} \sigma^{1-l}\left(\frac{s_m}{t_l}\right) x^{m-l+1} - t_0\sigma^{-l}\left(\frac{s_m}{t_l}\right)x^{m-l}\\ & = \cdots \\ & =f(x)- \underset{i=0}{\overset{l}{\sum}} t_i \sigma^{i-l}\left(\frac{s_m}{t_l}\right) x^{m-l+i} 
\\ & =
f(x)-s_m x^m - \underset{i=0}{\overset{l-1}{\sum}} t_i \sigma^{i-l}\left(\frac{s_m}{t_l}\right) x^{m-l+i}.\end{aligned}
\]
Hence, both polynomials $f(x)-\frac{s_m}{\sigma^{m-l}(t_l)}x^{m-l}g(x)$ and $f(x)-g(x) \sigma^{-l}\left(\frac{s_m}{t_l}\right) x^{m-l}$ have degree $< m$. Therefore this procedure can be iterated, to find polynomials $q(x),r(x)$ such that
\[
f(x)=q(x)g(x)+r(x),~\mbox{resp. } f(x)=g(x)q(x)+r(x),~\deg (r(x)) < \deg (g(x)).
\]
If $r(x)=0$, we say that $g(x)$ is a right, resp. left, divisor of $f(x)$.

We next discuss the unicity of the remainder of this division. Suppose 
\[
f(x)=q_1(x)g(x)+r_1(x)=q_2(x)g(x)+r_2(x)
\]
then $(q_1(x)-q_2(x))g(x)=r_2(x)-r_1(x)$. Suppose that $q_1(x)-q_2(x)$ is not zero, then the degree of 
$(q_1(x)-q_2(x))g(x)$ is greater than $\deg(g(x))$, while the degree of $r_2(x)-r_1(x)$ is less than $\deg(g(x))$, therefore $q_1(x)=q_2(x)$ and $r_1(x)=r_2(x)$.
The proof for left division is done similarly.\\

%*************************************************************************************%
\subsection{Principal Ideals and Codes.}

Thanks to the unicity of the remainder, the skew polynomials of degree less than $n$ are canonical representatives of the elements of $(\Oc_K/\pf\Oc_K)[x;\sigma]/(x^n-u)$.

\begin{lemma}
Any right divisor $g(x)$ of $x^n-u$ generates a left principal ideal $(g(x))/(x^n-u)$ of 
$(\Oc_K/\pf\Oc_K)[x;\sigma]/(x^n-u)$. The set of left multiples of $g(x)$ by skew polynomials of degree $k=n-\deg(g(x))$ are canonical representatives of the elements of $(g(x))/(x^n-u)$. 
\end{lemma}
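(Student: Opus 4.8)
The plan is to fix once and for all a cofactor $h(x)$ with $x^n-u=h(x)g(x)$ --- this exists precisely because $g(x)$ is, by hypothesis, a right divisor of $x^n-u$ --- and then to show that left multiplication by $g(x)$ induces a bijection between the set of skew polynomials of degree $<k$ (where $k=n-\deg g(x)$) and the underlying set of $(g(x))/(x^n-u)$. Before that, I would check that the notation $(g(x))/(x^n-u)$ is legitimate: since $x^n-u$ is central, the two-sided ideal $(x^n-u)$ equals $(\Oc_K/\pf\Oc_K)[x;\sigma]\cdot(x^n-u)$, and $x^n-u=h(x)g(x)\in(g(x))$, so the left ideal $(g(x))$ contains every left multiple of $x^n-u$; hence $(g(x))/(x^n-u)$ is a well-defined left ideal of $(\Oc_K/\pf\Oc_K)[x;\sigma]/(x^n-u)$.

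The key preliminary observation is that the leading coefficient of $g(x)$ is a unit of $\Oc_K/\pf\Oc_K$. Writing $\deg g(x)=d$, $\deg h(x)=k$, and comparing leading coefficients in $x^n-u=h(x)g(x)$ via the twisted product rule $(ax^i)(bx^j)=a\sigma^i(b)x^{i+j}$ gives $h_k\,\sigma^k(g_d)=1$; since $\Oc_K/\pf\Oc_K$ is a finite ring, a one-sided inverse is two-sided, so $\sigma^k(g_d)$ is a unit, and applying $\sigma^{-k}$ shows $g_d$ is a unit (the same argument shows $h_k$ is a unit). A consequence I will use repeatedly: for any nonzero $a(x)$ the product $a(x)g(x)$ has degree exactly $\deg a(x)+d$, because its leading coefficient $a_{\deg a(x)}\sigma^{\deg a(x)}(g_d)$ is nonzero; in particular $\deg a(x)<k$ forces $\deg\bigl(a(x)g(x)\bigr)<n$, so such an $a(x)g(x)$ is already a canonical representative.

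With this in hand I would establish surjectivity and injectivity of $a(x)\mapsto a(x)g(x)\bmod(x^n-u)$ on $\{a(x):\deg a(x)<k\}$. For surjectivity: an arbitrary element of $(g(x))/(x^n-u)$ is represented by some $b(x)g(x)$; performing a right division of $b(x)$ by $h(x)$ (valid by the division algorithm of the previous subsection, since $h_k$ is a unit) yields $b(x)=q(x)h(x)+a(x)$ with $\deg a(x)<k$, whence $b(x)g(x)=q(x)(x^n-u)+a(x)g(x)$, so the two represent the same class. For injectivity: if $a_1(x)g(x)$ and $a_2(x)g(x)$ are congruent mod $(x^n-u)$ with $\deg a_i(x)<k$, then $\bigl(a_1(x)-a_2(x)\bigr)g(x)$ is a left multiple of $x^n-u$; but any nonzero left multiple of $x^n-u$ has degree $\geq n$ (again because the leading coefficient $1$ of $x^n-u$ is a unit), while the left-hand side has degree $<n$, so $\bigl(a_1(x)-a_2(x)\bigr)g(x)=0$, and the degree-exactness above then forces $a_1(x)=a_2(x)$.

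The only delicate point --- and the main obstacle one must be careful about --- is that $(\Oc_K/\pf\Oc_K)[x;\sigma]$ is in general neither an integral domain nor Euclidean, so one cannot simply cancel $g(x)$ or invoke a naive division. This is precisely why the unit-leading-coefficient observation (where finiteness of $\Oc_K/\pf\Oc_K$ is used) is needed, both to run the right division of $b(x)$ by $h(x)$ and to get the degree count that rules out spurious cancellations; combined with the uniqueness of the remainder proved in the previous subsection, this makes both the existence and the uniqueness of the claimed canonical representatives go through.
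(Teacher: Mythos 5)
Your overall strategy --- fix a cofactor $h(x)$ with $h(x)g(x)=x^n-u$, use right division of $b(x)$ by $h(x)$ for surjectivity and a degree count for injectivity --- is sound, and it actually supplies arguments that the paper's own three-line proof omits entirely (the paper simply asserts that elements of the ideal are left multiples of $g(x)$ of the stated form). However, the preliminary step on which everything rests has a genuine gap: the claim that the leading coefficient $g_d$ of an \emph{arbitrary} right divisor $g(x)$ of $x^n-u$ is a unit. You read off $h_k\,\sigma^k(g_d)=1$ from the top coefficient of $h(x)g(x)$, but this presupposes $\deg h(x)+\deg g(x)=n$. The ring $\Oc_K/\pf\Oc_K$ is a field only when $\pf$ is inert; otherwise degrees can drop under multiplication and a cofactor of degree $>n-d$ is genuinely possible. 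Concretely, in the ramified setting of the paper's Example 3 ($\Oc_K/\pf\Oc_K=\FF_2+\upsilon\FF_2$ with $\upsilon^2=0$ and $\sigma=\mathrm{id}$, $x^n-u=x^2+1$), the polynomial $g(x)=\upsilon x+1$ satisfies $g(x)^2=1$, so it is a unit and in particular a right divisor of $x^2+1$, with cofactor $h(x)=(x^2+1)(\upsilon x+1)$ of degree $3$; its leading coefficient $\upsilon$ is nilpotent, and the conclusion of the lemma fails for this $g(x)$, since $(g(x))/(x^2+1)$ is the whole quotient ring while the left multiples of $g(x)$ by constants form a strictly smaller set.

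The fix is to add the hypothesis that $g(x)$ is monic (or has unit leading coefficient), which is exactly the restriction the paper adopts in item (4) of its list of comments --- though the paper's justification there (``the leading coefficient is necessarily invertible since $g(x)$ divides $x^n-u$'') has the same defect as your derivation, and the paper's proof of the lemma does not address the point at all. Once $g(x)$ is assumed monic, your argument closes up correctly: comparing top coefficients then gives $h_m\sigma^m(1)=h_m\neq 0$, which forces $\deg h=n-d$ and $h$ monic, so the right division of $b(x)$ by $h(x)$ is licit, and your degree-exactness observation for $a(x)\mapsto a(x)g(x)$ yields both existence and uniqueness of the canonical representatives. With that hypothesis made explicit, your write-up is a more rigorous version of what the paper intends.
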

\begin{proof}
For any right divisor $g(x)$ of $x^n-u$, the ideal $(x^n-u)$ is contained in $(g(x))$. By the correspondence theorem for rings, $(g(x))/(x^n-u)$ is a left ideal of $(\Oc_K/\pf\Oc_K)[x;\sigma]/(x^n-u)$. Since skew polynomials of degree less than $n$ are canonical representatives, the elements of the ideal $(g(x))/(x^n-u)$ are left multiples of $g(x)$ by skew polynomials of degree less than $k=n-\deg g(x)$.
\end{proof}

\begin{corollary}
For $g(x)$ a right divisor of $x^n-u$, the ideal $(g(x))/(x^n-u)$ is an $\Oc_K/\pf\Oc_K$-module, isomorphic to a submodule of $(\Oc_K/\pf\Oc_K)^n$. It forms a code of length $n$ and of dimension $k=n-\deg g(x)$ over 
$\Oc_K/\pf\Oc_K$, 
consisting of codewords $a=(a_0,a_1,\ldots,a_{n-1})$ that are coefficient tuples of the left multiples $a(x)=a_0+a_1x+\ldots+a_{n-1}x^{n-1}$ of $g(x)$.
This code is called a {\em $\sigma$-constacyclic code}.
\end{corollary}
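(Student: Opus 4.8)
The plan is first to make the $\Oc_K/\pf\Oc_K$-module structure precise and exhibit the code, and then to compute its dimension via an explicit basis. Regard $(\Oc_K/\pf\Oc_K)[x;\sigma]$ as a left $\Oc_K/\pf\Oc_K$-module, with scalars acting coefficientwise on the left, so that the coefficient map $a_0+a_1x+\cdots+a_{n-1}x^{n-1}\mapsto(a_0,\ldots,a_{n-1})$ is an $\Oc_K/\pf\Oc_K$-linear bijection from the skew polynomials of degree $<n$ onto $(\Oc_K/\pf\Oc_K)^n$. By the preceding lemma, every element of $(g(x))/(x^n-u)$ has a unique representative of degree $<n$, and these representatives are precisely the left multiples $c(x)g(x)$ with $\deg c(x)<k$. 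Since $(g(x))/(x^n-u)$ is a left ideal it is in particular closed under left multiplication by constants, hence the coefficient map carries it to an $\Oc_K/\pf\Oc_K$-submodule $C$ of $(\Oc_K/\pf\Oc_K)^n$, whose elements are exactly the coefficient tuples of the left multiples of $g(x)$ of degree $<n$. This is the asserted code of length $n$.

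For the dimension, I would first observe that the leading coefficient of $g(x)$ is a unit of $\Oc_K/\pf\Oc_K$: writing $x^n-u=q(x)g(x)$ and comparing coefficients of $x^n$ gives $1=(\mathrm{lc}\,q)\,\sigma^{\deg q}(\mathrm{lc}\,g)$, so $\sigma^{\deg q}(\mathrm{lc}\,g)$ is invertible, and applying the ring automorphism $\sigma^{-\deg q}$ shows $\mathrm{lc}\,g$ is invertible. Consequently, for any nonzero $c(x)$ of degree $d$, the coefficient of $x^{d+\deg g}$ in $c(x)g(x)$ equals $(\mathrm{lc}\,c)\,\sigma^{d}(\mathrm{lc}\,g)$, a nonzero element times a unit, hence nonzero (if $ab=0$ with $b$ a unit then $a=0$); thus $\deg(c(x)g(x))=d+\deg g$. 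Therefore $c(x)\mapsto c(x)g(x)$ on skew polynomials of degree $<k$ is an injective $\Oc_K/\pf\Oc_K$-linear map whose image is the set of canonical representatives of $(g(x))/(x^n-u)$, so the images of $g(x),xg(x),\ldots,x^{k-1}g(x)$ generate $C$; and any relation $\sum_{i=0}^{k-1}c_ix^ig(x)=\big(\sum_i c_ix^i\big)g(x)=0$ forces $\sum_i c_ix^i=0$, since a nonzero polynomial of degree $<k$ would yield a nonzero product. Hence $C$ is free of rank $k=n-\deg g(x)$ over $\Oc_K/\pf\Oc_K$, $|C|=|\Oc_K/\pf\Oc_K|^{k}$, and calling $k$ the dimension is legitimate.

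The only genuine subtlety, since $\Oc_K/\pf\Oc_K$ need not be a field or even an integral domain, is that ``dimension'' must be read as the rank of a free module and freeness must be checked directly rather than deduced from a vector-space count; this is exactly what the explicit basis $\{x^ig(x)\}_{0\le i<k}$ provides, and everything hinges on the fact that $\mathrm{lc}\,g$ is a unit (which itself uses that $x^n-u$ has unit leading coefficient and that $\sigma$ is an automorphism). Finally, to justify the name, I would record that under the identification above, left multiplication by $x$ in $(\Oc_K/\pf\Oc_K)[x;\sigma]/(x^n-u)$ sends $(a_0,\ldots,a_{n-1})$ to $(u\,\sigma(a_{n-1}),\sigma(a_0),\ldots,\sigma(a_{n-2}))$, a $\sigma$-twisted constacyclic shift, under which $C$ is stable because it is a left ideal; whence the terminology $\sigma$-constacyclic code.
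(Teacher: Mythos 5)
Your proof is correct and follows essentially the same route as the paper's: reduce to canonical representatives of degree $<n$ via the preceding lemma, exhibit $g(x),xg(x),\ldots,x^{k-1}g(x)$ as a basis, and transport the module structure through the coefficient map to $(\Oc_K/\pf\Oc_K)^n$. The only difference is that you actually justify the basis claim (leading coefficient of $g$ is a unit, hence $\deg(c(x)g(x))=\deg c+\deg g$, hence freeness of rank $k$ even though $\Oc_K/\pf\Oc_K$ need not be a field), a point the paper simply attributes to its Lemma 1 without detail, so your write-up is if anything more complete.
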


\begin{proof}
Let $g(x)$ be a right divisor of $x^n-u$ in $(\Oc_K/\pf\Oc_K)[x;\sigma]$. The left ideal $(g(x))$ in $(\Oc_K/\pf\Oc_K)[x;\sigma]$ is then an $\Oc_K/\pf\Oc_K$-module and by taking the quotient by the monic skew-polynomial $x^n-u$, $(g(x))/(x^n-u)$ is a submodule of the free $\Oc_K/\mathfrak p\Oc_K$-module $(\Oc_K/\pf\Oc_K)[x;\sigma]/(x^n-u)$ of rank $n$. By Lemma 1, the images of the skew polynomials $g(x), xg(x),...,x^{k-1}g(x)$ in the quotient ring $(\Oc_K/\pf\Oc_K)[x;\sigma]/(x^n-u)$ form a basis of $g(x)/(x^n-u)$ as an $\Oc_K/\mathfrak p\Oc_K$-module. We then use the $\Oc_K/\mathfrak p\Oc_K$-module isomorphism \[\begin{aligned} (\Oc_K/\pf\Oc_K)[x;\sigma]/(x^n-u) \hspace*{.5cm}& \rightarrow (\Oc_K/\pf\Oc_K)^n \\ \underset{i=0}{\overset{n-1}{\sum}} a_ix^i +(x^n-u)(\Oc_K/\pf\Oc_K)[x;\sigma] &\mapsto (a_0,...,a_{n-1})\end{aligned}\] to conclude.
\end{proof}

Note that for a $\sigma$-constacyclic code $\Cc$, corresponding to a left principal ideal $\Ic$ of $(\Oc_K/\pf\Oc_K)[x;\sigma]/(x^n-u)$, if $(a_0,\ldots,a_{n-1}) \in \Cc$, then $a(x)=a_0+a_1x+\ldots+a_{n-1}x^{n-1} \in \Ic$, and $xa(x) \in \Ic$. Since
\[
xa(x)=x(a_0+a_1x+\ldots+a_{n-1}x^{n-1})=\sigma(a_0)x+\sigma(a_1)x^2+\ldots+\sigma(a_{n-1})u
\]
we have that
\[
(u\sigma(a_{n-1}),\sigma(a_0),\ldots,\sigma(a_{n-1})) \in \Cc,
\]
which explains the choice of the terminology ``constacyclic".

Let us comment on the proposed definition of $\sigma$-constacyclic code.
\begin{enumerate}
\item
We could also use the terminology {\em central $\sigma$-code} proposed in \cite{Boucher} since $x^n-u$ lies in the center of $(\Oc_K/\pf\Oc_K)[x;\sigma]$.
\item
A cyclic code would correspond to the case $u=1$.
\item
This definition could be made more general by considering another suitable polynomial 
than $x^n-u$, which would yield a complete generalization of the definition of {\em $\sigma$-code} proposed in \cite{Boucher}. This is not done here, simply because the polynomial is given by the cyclic algebra structure, but this could be an interesting research direction of its own.
\item Finally, we may (and do) restrict our study of $\sigma$-constacyclic codes to right divisors $g(x)$ of $x^n-u$ that are monic : indeed, if $g(x)$ has leading coefficient $a$, it is necessarily invertible (since $g(x)$ is a divisor of $x^n-u$) and the constacyclic codes derived from $g(x)$ and $a^{-1}g(x)$ are equal.\\

\end{enumerate}

The next result describes a parity check polynomial.
\begin{proposition}\label{prop:parity}
Set $h(x)=\underset{i=0}{\overset{k-1}{\sum}}h_ix^i+x^k$. If $h(x)g(x)=x^n-u$, then  
\[
g(x)h(x)=x^n-u.
\]
Furthermore, let $\Cc$ be the code generated by $g(x)$. Let $a\in (\Oc_K/\mathfrak p \Oc_K)^n$ and let $a(x)$ be its corresponding polynomial.
Then $a$ is a codeword of $\Cc$ if and only if $a(x)h(x)=0$ in $(\Oc_K/\pf\Oc_K)[x;\sigma]/(x^n-u)$. 
\end{proposition}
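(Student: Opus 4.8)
The plan is to prove the two assertions separately, first the "commutation" statement $g(x)h(x) = x^n - u$ given $h(x)g(x) = x^n - u$, and then the parity-check characterization. For the first part, the key observation is that $x^n - u$ is central in $(\Oc_K/\pf\Oc_K)[x;\sigma]$, so it commutes with everything, and in particular with $g(x)$. From $h(x)g(x) = x^n - u$ I would multiply on the left by $g(x)$ to get $g(x)h(x)g(x) = g(x)(x^n-u) = (x^n-u)g(x) = h(x)g(x)g(x)$, i.e. $\bigl(g(x)h(x) - h(x)g(x)\bigr)g(x) = 0$. Since $g(x)$ is monic (we restrict to monic divisors, as noted in the remarks), its leading coefficient is a unit, so $g(x)$ is not a zero divisor (it is right-regular): if $p(x)g(x) = 0$ with $p(x) \neq 0$, comparing leading terms gives a contradiction because $\sigma^{\deg p}(\text{lc}(p))\cdot\text{lc}(g)$ is a unit, hence nonzero. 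Therefore $g(x)h(x) = h(x)g(x) = x^n - u$, as desired. (Alternatively one can note that $g(x)$ divides $x^n - u$ on the right, and $h(x)$ is then forced to be both the left and right cofactor by a degree/leading-coefficient argument; but the centrality trick is cleanest.)

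For the second part, the forward direction is immediate: if $a \in \Cc$, then $a(x) = b(x)g(x)$ in $(\Oc_K/\pf\Oc_K)[x;\sigma]/(x^n-u)$ for some $b(x)$ of degree $< k$, so $a(x)h(x) = b(x)g(x)h(x) = b(x)(x^n-u) = 0$ in the quotient, using the first part of the proposition. For the converse, suppose $a(x)h(x) = 0$ in the quotient ring, i.e. $a(x)h(x) = c(x)(x^n-u) = c(x)h(x)g(x)$ in $(\Oc_K/\pf\Oc_K)[x;\sigma]$ for some $c(x)$. Then $\bigl(a(x) - c(x)h(x)\bigr)h(x) = 0$ in $(\Oc_K/\pf\Oc_K)[x;\sigma]$; but $h(x)$ is monic, hence right-regular by the same leading-coefficient argument as above, so $a(x) = c(x)h(x)$. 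This exhibits $a(x)$ as a left multiple of... wait, I want it as a left multiple of $g(x)$, not $h(x)$, so instead I should perform a right division of $a(x)$ by $g(x)$ (legitimate since $g(x)$ is monic, by the division algorithm established in Section~\ref{sec:code}): $a(x) = q(x)g(x) + r(x)$ with $\deg r(x) < \deg g(x)$. Then $r(x)h(x) = a(x)h(x) - q(x)g(x)h(x) = 0 - q(x)(x^n-u)$, and since $\deg r(x) < \deg g(x)$ we have $\deg(r(x)h(x)) < \deg g(x) + k = n \le \deg(x^n - u)$ unless $q(x)(x^n-u) = 0$, which forces $q(x)(x^n-u)=0$ hence $r(x)h(x) = 0$, and right-regularity of $h(x)$ gives $r(x) = 0$. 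Hence $a(x) = q(x)g(x)$ is a left multiple of $g(x)$ of degree $< n$, so $a \in \Cc$.

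The main obstacle is bookkeeping around zero-divisors: because $\Oc_K/\pf\Oc_K$ need not be a field (it is only a product of Galois rings when $\pf$ is not inert), one cannot invoke the usual field-theoretic arguments, and one must repeatedly use that \emph{monic} skew polynomials are left- and right-regular — which does hold and follows from the leading coefficient being a unit together with $\sigma$ being an automorphism (so it sends units to units). I would state and use this regularity fact explicitly, perhaps as a short preliminary observation, since it is invoked three times. The degree comparisons in the converse direction also need a little care to rule out the possibility that $q(x)(x^n-u)$ has degree exactly matching $r(x)h(x)$; since $x^n - u$ is monic, $\deg(q(x)(x^n-u)) = \deg q(x) + n \ge n > \deg(r(x)h(x))$ whenever $q(x) \neq 0$ and $r(x)h(x)\neq 0$, so equating the two forces $r(x)h(x) = 0$, and then regularity finishes it.
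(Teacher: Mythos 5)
Your argument is correct and follows essentially the same route as the paper: the first claim via centrality of $x^n-u$ together with cancellation of a monic (hence regular) factor, and the parity-check equivalence via $g(x)h(x)=x^n-u$. The only divergence is in the converse, where the paper writes $a(x)h(x)=c(x)(x^n-u)=c(x)g(x)h(x)$ and cancels the monic $h(x)$ on the right directly, which shortcuts your division-by-$g(x)$ detour (and sidesteps the false intermediate step $(a(x)-c(x)h(x))h(x)=0$ that you rightly abandoned).
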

\begin{proof}
Since $x^n-u$ is central,
\[
h(x)g(x)h(x)=(x^n-u)h(x)=h(x)(x^n-u)=h(x)h(x)g(x)
\]
and
\[
h(x)[g(x)h(x)-h(x)g(x)]=0.
\]
The first claim follows from the fact that $h(x)$ is monic and thus not a zero divisor.

If $a\in \Cc$ then $a(x)=b(x)g(x)$ and using the first claim
\[
a(x)h(x)=b(x)g(x)h(x)= 0 \mod x^n-u.
\]
Conversely, if $a(x)h(x)=0$, then for some $b(x)$, $a(x)h(x)=b(x)(x^n-u)=b(x)g(x)h(x)$. Since $h(x)$ is monic and thus not a zero divisor, $a(x)=b(x)g(x)$ as needed.
\end{proof}

%*************************************************%

\subsection{A Construction A}
Using the isomorphism $\psi$ defined in Section \ref{sec:quot}, for every left ideal $\Ic$ of $\Lambda$, we consider the $\sigma$-code $\Cc=\psi(\Ic/\pf\Lambda)$ over $\FF_q$.  

A lattice of dimension $N$ is here a discrete additive subgroup of $\RR^N$ of rank $N$ as a $\ZZ$-module.

We set the map :
\[\rho: \Lambda \rightarrow \psi(\Lambda/\pf\Lambda)=\FF_q[x;\sigma]/(x^n-u),\] compositum of the canonical projection $\Lambda\rightarrow \Lambda/\pf\Lambda$ with $\psi$. Then    
\[
L= \rho^{-1}(\Cc)=\Ic
\]
is a lattice in $\RR^N$, that is a $\ZZ$-module of rank $N=n^2[F:\QQ]$ since $\Oc_K$ is a $\ZZ$-module of rank $n[F:\QQ]$.

From this point of view, the above construction may be interpreted as a variation of Construction A \cite{splag}, which consists of obtaining a lattice from a linear code over a finite field (ring), as shortly described in the introduction.
This is also a generalization of the lattice construction of \cite{itw}, defined over number fields.\\

%*********************************************************%

\subsection{Examples}
Let $K=\mathbb Q(i)$ and $F=\mathbb Q$. Then $\Oc_F=\mathbb Z$ and $\Oc_K=\mathbb Z[i]$. 
Let $\mathfrak Q$ be the quaternion division algebra defined by 
\[
\mathfrak Q=\mathbb Q(i)\oplus \mathbb Q(i)e,
\] 
with $e^2=-1$. Since $N_{K/F}(a+ib)=a^2+b^2$, $a,b\in\ZZ$, $-1$ cannot be a norm and $\mathfrak Q$ is indeed a quaternion division algebra. 
We set $\Lambda=\mathbb Z[i]\oplus \mathbb Z[i]e$.

We provide different examples based on primes with different ramification. 
\begin{example}\label{ex:p3}\rm
Set $p=3$, which remains inert in $\mathbb Q(i)$. Hence, $\mathbb Z[i]/3\mathbb Z[i]\simeq \FF_9$ and $\Lambda/3\Lambda \simeq \FF_9 \oplus \FF_9 e$. Take $\Ic=(1+i+e)\Lambda$. Then $\Ic$ contains $3$ since the norm of $1+i+e$ is $N(1+i+e)=(1+i+e)(1-i-e)=3$. Let $\alpha$ denote a primitive root of $\FF_9$ over $\FF_3$, satisfying $\alpha^2+1=0$ and note that $\sigma$ becomes the generator of the Galois group of $\FF_{9}/\FF_3$, $\sigma(\alpha)=\alpha^3$. We have 
\[
\psi((1+i+e) {\rm{ mod }} 3)=1+\alpha+x,
\] 
which is a right divisor of $x^2+1$ in $\FF_9[x;\sigma]$: 
\[
(x-1+\alpha)(x+1+\alpha)=x^2+\sigma(1+\alpha)x+(\alpha-1)x+(\alpha+1)(\alpha-1)=x^2+1.
\] 
Therefore, the left ideal $(x+1+\alpha)\FF_9[x;\sigma]/(x^2+1)$ consisting of the left multiples of $x+ 1+\alpha$ modulo $x^2+1$ is a $\sigma$-constacyclic code of length $n=2$ and dimension $k=n-\deg g(x)=1$. 
Left multiples of $x+ 1+\alpha$ modulo $x^2+1$ are explicitly given by $(b_0+b_1x)(x+\alpha+1)=[b_0(\alpha+1)-b_1]+[b_0+b_1(1-\alpha)]x$ and codewords are of the form
\[
(a_0,a_1)=((\alpha+1) a_1, a_1),~a_1 \in \FF_9.
\]
Taking the pre-image by $\psi$, it corresponds to the left-ideal $\Ic/3\Lambda$, with $\Ic=(1+i+e)\Lambda $.
\end{example}

\begin{example}\label{ex:p5}\rm
Set $p=5$, which splits in $\QQ(i)$, namely $(5)=(1+2i)(1-2i)$. Then $\ZZ[i]/5\ZZ[i]\simeq \ZZ[i]/(1+2i) \times \ZZ[i]/(1-2i) \simeq \FF_5 \times \FF_5$ and $\Lambda\simeq (\FF_5\times\FF_5)\oplus (\FF_5\times\FF_5)e$. The first isomorphism comes from the Chinese Remainder Theorem, since $(1-2i)$ and $(1+2i)$ are coprime. An element of $\ZZ[i]$ is mapped to $\FF_5 \times \FF_5$ via the composition of the natural projection and the above isomorphism by $\pi: a+ib \mapsto (a+2b \mod 5,a+3b \mod 5)$. Since $\pi((a+ib)(c+di))=\pi(ac-bd+i(ad+bc))=(ac-bd+2(ad+bc)\mod 5,ac-bd+3(ad+bc)\mod 5)$ while $\pi(a+ib)\pi(c+di)=(a+2b \mod 5,a+3b\mod 5)(c+2d \mod 5,c+3d \mod 5)=(ac+2ad+2cb-bd \mod 5,ac+3ad+3bc-bd \mod 5)$, and $\pi(a+ib+c+id)=a+c+2(b+d)=a+2b+c+2d=\pi(a+ib)+\pi(c+id)$, $\pi$ is indeed a ring homomorphism. Note that $\sigma$ becomes $\sigma(a,b)=(b,a)$, which fixes elements of the form $(a,a)$, $a \in \FF_5$.
Take $\Ic=(1+2e)\Lambda$, which contains $5$ since the norm of $\Ic$ is $N(1+2e)=(1+2e)(1-2e)=5$. Then 
\[
\psi((1+2e) {\rm{ mod }} 5)= (1,1)+(2,2)x,
\] 
which is a divisor of $x^2+1$, since
\[
((1,1)+(2,2)x)((1,1)-(2,2)x)=(1,1)-(2,2)x+(2,2)x+(1,1)x^2.
\]
Then the left ideal $((1,1)+(2,2)x)(\FF_5\times\FF_5)[x;\sigma]/(x^2+1)$ forms a $\sigma$-constacyclic code of length $n=2$ and $k=1$. Codewords are of the form $[(b_0,b_1)+(c_0,c_1)x][(1,1)+(2,2)x]$ or equivalently, $[(b_0,b_1)+(c_0,c_1)x][(3,3)+(1,1)x]=[(3b_0,3b_1)-(c_0,c_1)]+[(b_0,b_1)+(3c_0,3c_1)]x$, that is
\[
(3a,a),~a=(a_0,a_1) \in \FF_5 \times \FF_5.
\]
Again, taking the pre-image by $\psi$, it corresponds to the left-ideal $\Ic/5\Lambda$, with $\Ic=(1+2e)\Lambda $.
\end{example}

\begin{example}\label{ex:p2}\rm
Set $p=2$, which is ramified in $\QQ(i)$: $(2)=(1+i)^2$. Then $\ZZ[i]/2\ZZ[i]\simeq \FF_2 + \upsilon \FF_2 =\{0,1,\upsilon,\upsilon+1\}$ with $\upsilon^2=0$, and $\Lambda/2\Lambda \simeq (\FF_2 + \upsilon \FF_2)\oplus (\FF_2 + \upsilon \FF_2)e $. In this case, the natural projection $\pi$ from $\ZZ[i]$ to $\FF_2 + \upsilon \FF_2$ is given by $\pi:a+ib \mapsto (a+b)\mod 2 + \upsilon (b \mod 2) $. We check that $\pi$ is a ring homomorphism: $\pi((a+ib)(c+di))=\pi(ac-bd+i(ad+bc))=(ac+bd+ad+bc \mod 2)+\upsilon (ad+bc \mod 2)$ while $\pi(a+ib)\pi(c+di)=((a+b \mod 2)+\upsilon(b\mod 2))((c+d\mod 2)+\upsilon(d \mod 2))=(a+b)(c+d)\mod 2+(a+b)d\upsilon+b(c+d)\upsilon=(ac+bd+ad+bc)+\upsilon(ad+bc) \mod 2$, as needed, and $\pi(a+ib+c+id)=a+c+b+d+\upsilon(b+d)=a+b+\upsilon b+c+d+\upsilon d=\pi(a+ib)+\pi(c+id)$. Moreover, since $p=2$ totally ramifies in $\ZZ[i]$, $\sigma$ becomes the identity map.
Take $\Ic=(1+e)\Lambda$, which contains $2$ since the norm of $\Ic$ is $N(1+e)=(1+e)(1-e)=2$. Then 
\[
\psi((1+e) {\rm{ mod }} 2)= 1+x,
\] 
which is a divisor of $x^2+1=(x+1)(x+1)$.
As in the previous cases, the left ideal $(1+x)(\FF_2+v\FF_2)[x;\sigma]/(x^2+1)$ forms a $\sigma$-constacyclic code of length $n=2$ and $k=1$.
Codewords are of the form $(b_0+b_1x)(1+x)$, that is we get the repetition code
\[
(a_0,a_1)=(a_0,a_0),~a_0 \in \FF_2+\upsilon\FF_2.
\] 
As above, taking the pre-image by $\psi$, it corresponds to the left-ideal $\Ic/2\Lambda$, with $\Ic=(1+e)\Lambda $.
\end{example}

%****************************************************************************************%
%
%
%**************************************************************************************%

\section{Dual Codes and Lattices}
\label{sec:dualcode}

Having defined constacyclic codes over $(\Oc_K/\pf\Oc_K)[x;\sigma]/(x^n-u)$, it is natural to wonder about the dual of such codes.\\

\subsection{Dual Codes}
We consider the following Euclidean scalar product in $(\Oc_K/\pf\Oc_K)^n$:
\[
\langle y,z \rangle= \sum_{i=1}^n y_i z_i.
\]
\begin{definition}
For an $(n,k)$ code $\Cc$, its Euclidean dual code $\Cc^\perp$ is given by
\[
\Cc^\perp=\{y,~\langle c,y \rangle=0 \mbox{ for all c } \in \Cc\}.
\]
An $(n,k)$ code $\Cc$ is said to be Euclidean self-dual if $\Cc$ is equal to its Euclidean dual.
\end{definition}

It is probably helpful to recall how the generator polynomial of a cyclic code over a finite field is computed. Take $h(x)$ its parity check polynomial, with degree $d$ and constant term $h_0$, and compute $h_0^{-1} x^d h (x^{-1})$. The procedure in the case of interest here is somewhat similar, except for one difficulty: we need to give a meaning to $h(x^{-1})$ in the non-commutative case, which is classically done through localization (as also done in \cite{BUS}).
It is known \cite[Thm II.2.4]{Artin} that given a ring $R$, if $S \subset R$ is a right Ore set, then there is a ring $RS^{-1}$ of right fractions and an injective ring homomorphism $R \rightarrow RS^{-1}$ such that (a) the image of every element $r\in R$ is invertible in $RS^{-1}$, and (b) every element of $RS^{-1}$ can be written as a product $rs^{-1}$.

\begin{proposition}
Consider the ring $(\Oc_K/\pf\Oc_K)[x;\sigma]$ as above, and take $S$ to be $S=\{x^i,~i \geq 0\}$. Then $S$ is a right Ore set, and the right localization $(\Oc_K/\pf\Oc_K)[x;\sigma]S^{-1}$ exists. Furthermore, the subring $A=\{ \sum_{i=0}^d x^{-i}a_i \}$ of $(\Oc_K/\pf\Oc_K)[x;\sigma]S^{-1}$, with multiplication $ax^{-1}=x^{-1}\sigma(a)$ is isomorphic to the ring $(\Oc_K/\pf\Oc_K)[x^{-1},\sigma^{-1}]$.
\end{proposition}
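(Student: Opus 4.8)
The plan is to verify that $S=\{x^i:i\ge 0\}$ is a right Ore set by exploiting that $x$ is a \emph{normal} element, invoke the quoted localization theorem to produce $RS^{-1}$ (write $R:=(\Oc_K/\pf\Oc_K)[x;\sigma]$), and then exhibit an explicit ring homomorphism from $(\Oc_K/\pf\Oc_K)[x^{-1},\sigma^{-1}]$ onto $A$ and check it is bijective. The first point to record is that $\sigma$ is an \emph{automorphism} of $\Oc_K/\pf\Oc_K$, not just an endomorphism: as in Section~\ref{sec:quot}, $\sigma$ restricts to a ring automorphism of $\Oc_K$ fixing $\pf$, hence descends to a bijection of the quotient, and this is what gives meaning to $\sigma^{-1}$ and to the target ring. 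Next, for $a=\sum_k a_kx^k\in R$ one has $xa=\bigl(\sum_k\sigma(a_k)x^k\bigr)x$, and $\sigma$ being surjective this gives $xR=Rx$, hence $x^iR=Rx^i$ for all $i\ge 0$; so $S$ is a multiplicatively closed set of normal elements. The right Ore (permutability) condition is then immediate: given $a(x)=\sum_k a_kx^k$ and $x^i\in S$, the polynomial $b(x):=\sum_k\sigma^{-i}(a_k)x^k$ satisfies $a(x)x^i=x^ib(x)$, so $a(x)S\cap x^iR\neq\emptyset$. Moreover right multiplication by $x^i$ merely shifts the coefficient sequence of a skew polynomial, so $a(x)x^i=0$ forces $a(x)=0$; hence every element of $S$ is regular and the reversibility hypothesis holds vacuously. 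By \cite[Thm II.2.4]{Artin} quoted above, $RS^{-1}$ therefore exists, the natural map $R\to RS^{-1}$ is injective (regularity of $S$), each $x^i$ becomes invertible, and every element of $RS^{-1}$ has the form $rs^{-1}$, $r\in R$, $s\in S$.

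Next I would build the map. Inverting $x$ in the relation $xa=\sigma(a)x$ gives $x^{-1}a=\sigma^{-1}(a)x^{-1}$, equivalently $ax^{-1}=x^{-1}\sigma(a)$, for all $a\in\Oc_K/\pf\Oc_K$; iterating, $ax^{-j}=x^{-j}\sigma^{j}(a)$, so $(x^{-i}a)(x^{-j}b)=x^{-(i+j)}\sigma^{j}(a)b$. Hence $A=\{\sum_{i=0}^d x^{-i}a_i:a_i\in\Oc_K/\pf\Oc_K,\ d\ge 0\}$ is closed under multiplication, and being obviously an additive subgroup containing $1$ it is a subring of $RS^{-1}$. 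Let $\Phi\colon(\Oc_K/\pf\Oc_K)[t;\sigma^{-1}]\to RS^{-1}$ be the map which is the identity on $\Oc_K/\pf\Oc_K$ and sends the variable $t$ to $x^{-1}$; by the universal property of skew polynomial rings it is a well-defined ring homomorphism precisely because the defining relation $ta=\sigma^{-1}(a)t$ is carried to the identity $x^{-1}a=\sigma^{-1}(a)x^{-1}$ established above, and its image is visibly $A$.

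It remains to show $\Phi$ is injective, which I expect to be the only step needing a real argument, and the one place where the failure of $\Oc_K/\pf\Oc_K$ to be a domain is relevant. Writing a general element of $(\Oc_K/\pf\Oc_K)[t;\sigma^{-1}]$ as $\sum_{i=0}^d t^ia_i$, suppose $\Phi$ sends it to $0$, i.e.\ $\sum_{i=0}^d x^{-i}a_i=0$ in $RS^{-1}$. Left-multiplying by the unit $x^d$ yields $\sum_{i=0}^d x^{d-i}a_i=0$, now an identity in $R$; since $x^{d-i}a_i=\sigma^{d-i}(a_i)x^{d-i}$, the coefficient of $x^{j}$ in this skew polynomial is $\sigma^{j}(a_{d-j})$, so all $\sigma^{j}(a_{d-j})$ vanish and, $\sigma$ being injective, all $a_i$ vanish. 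Thus $\Phi$ is an isomorphism onto $A$, which is the assertion. Throughout, the potential obstruction caused by zero divisors in $\Oc_K/\pf\Oc_K$ is circumvented by using that the denominators $x^i$ are normal, hence regular, rather than by invoking any cancellation of coefficients.
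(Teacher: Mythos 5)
Your proof is correct. The Ore-set half is essentially the paper's argument in different clothing: the paper exhibits, for given $r(x)=\sum_l r_lx^l$ and $x^i$, the explicit witness $r_1(x)=\sum_l\sigma^{-i}(r_l)x^{j+l-i}$ with $x^ir_1(x)=r(x)x^j$, which is exactly your normality computation $a(x)x^i=x^ib(x)$ with $b(x)=\sum_k\sigma^{-i}(a_k)x^k$; you merely package it as ``$x$ is normal and regular, hence $S$ is Ore,'' and you actually justify the regularity of the $x^i$ (the paper lists ``not a zero divisor'' as a condition to check but does not argue it — you only spell out right regularity, but left regularity is the same shift-plus-injectivity-of-$\sigma$ observation). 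Where you genuinely diverge is the second half. The paper defines $\theta:\sum a_ix^i\mapsto\sum x^{-i}a_i$ on $(\Oc_K/\pf\Oc_K)[x;\sigma]$, verifies by direct expansion that it is additive and reverses products, and concludes it is an \emph{anti}-isomorphism onto $A$; the identification of $A$ with $(\Oc_K/\pf\Oc_K)[x^{-1};\sigma^{-1}]$ is then left implicit (via the opposite-ring description of a skew polynomial ring), and bijectivity of $\theta$ is asserted rather than proved. You instead build the isomorphism $\Phi$ directly out of $(\Oc_K/\pf\Oc_K)[t;\sigma^{-1}]$ by the universal property, checking only the single relation $x^{-1}a=\sigma^{-1}(a)x^{-1}$, and you supply the injectivity argument (clear denominators by $x^d$ and compare coefficients in $R$, using injectivity of $R\to RS^{-1}$). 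Your route is closer to the literal statement of the proposition and fills in the bijectivity step; the paper's route makes the anti-automorphic nature of the coefficient flip explicit, which is what it actually uses later when manipulating $\theta(h(x))$ for the dual code. Both are sound.
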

\begin{proof}
By definition of Ore set \cite[Def. II.2.3]{Artin}, we need to check the following three conditions:
\begin{enumerate}
\item
$S$ is closed under multiplication, and $1\in S$.
\item
For any $s \in S$, $s$ is not a left or right zero divisor.
\item
Right Ore condition: for all $r \in R$ and $s\in S$, there exist $r_1 \in R$ and $s_1 \in S$ such that $s r_1=r s_1$. Take $s=x^i \in S$ for some $i$, and $r=r(x)=\sum_{l=0}^d r_l x^l \in \Oc_K/\pf\Oc_K[x;\sigma]$. 
We need to show that there exists $r_1=r_1(x)$ such that $x^i r_1(x)= r(x) x^j$ for some $x^j$. Pick $j \geq i$ and $r_1(x)=\sum_{l=0}^d \sigma^{-i}(r_l)x^{j+l-i}$.
Then
\[
x^i \sum_{l=0}^d \sigma^{-i}(r_l)x^{j+l-i} = \sum_{l=0}^d r_lx^{j+l} = r_1(x)x^j
\]
as needed to show the existence of the localization $(\Oc_K/\pf\Oc_K)[x;\sigma]S^{-1}$.
\end{enumerate}

Next, consider the map
\[
\theta: (\Oc_K/\pf\Oc_K)[x;\sigma] \rightarrow A,~\sum_{i=0}^da_ix^i \mapsto \sum_{i=0}^d x^{-i}a_i.
\]
We have, assuming without loss of generality that $t\geq d$ and $a_{d+1}=\ldots = a_{t}=0$, that 
\[
\theta(\sum_{i=0}^da_ix^i+\sum_{i=0}^t b_ix^i)=
\theta(\sum_{i=0}^t (a_i+b_i)x^i)
=\theta(\sum_{i=0}^da_ix^i)+\theta(\sum_{i=0}^t b_ix^i) 
\]
and
\begin{eqnarray*}
\theta( \sum_{i=0}^da_ix^i\sum_{i=0}^t b_ix^i) & = &
\theta\left(\sum_{k=0}^{d+t} (\sum_{i+j=k} a_i\sigma^i(b_j))x^k \right) \\
&=&
\sum_{k=0}^{d+t} x^{-k} \left(\sum_{i+j=k} a_i\sigma^i(b_j)) \right) \\ 
&=&
\sum_{k=0}^{d+t} \sum_{i+j=k} x^{-j}x^{-i} \sigma^i(b_j) a_i \\
&=&
\sum_{k=0}^{d+t} \sum_{i+j=k} x^{-j} b_j x^{-i} a_i \\
&=&
\theta(\sum_{i=0}^t b_ix^i )\theta(\sum_{i=0}^da_ix^i ) .
\end{eqnarray*}
It follows that $\theta$ is an anti-isomorphism of rings.
\end{proof}

Based on the above proposition, it is now possible to extend the computation of the generator polynomial from the commutative to the non-commutative case.
Recall that we may assume the generator polynomial $g(x)$ to be monic, without loss of generality.

\begin{proposition}\label{prop:gperp}
Consider a code with generator $g(x)$ and parity check polynomial $h(x)=\sum_{i=0}^{k-1}h_ix^i+x^k$, that is $g(x),h(x)$ satisfy $h(x)g(x)=x^n-u$. 
Suppose that $u^2=1$.
The Euclidean dual $g^\perp(x)/(x^n-u)$ of the $\sigma$-constacyclic code $(g(x))/(x^n-u)$ is a $\sigma$-constacyclic code whose generator polynomial is given by
\[
g^\perp(x)=1+\sum_{i=1}^k \sigma^{i}(h_{k-i})x^i,
\] 
or equivalently by the monic polynomial $-ug^\perp(x)$.
\end{proposition}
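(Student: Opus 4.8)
The plan is to realise the passage $h(x)\mapsto g^\perp(x)$ as a ``reciprocal polynomial'' operation built from the anti-isomorphism $\theta$ of the preceding proposition, then to combine a right-division argument (to see that $g^\perp(x)$ genuinely generates a $\sigma$-constacyclic code) with an explicit orthogonality computation and a cardinality count (to identify that code with $\Cc^\perp$). Concretely, for a polynomial $p(x)=\sum_{i=0}^{d}p_ix^i$ of degree $d$ with $p_d$ a unit, I would set $p^\dagger(x):=x^d\,\theta(p(x))$, which unwinds to $p^\dagger(x)=\sum_{j=0}^{d}\sigma^j(p_{d-j})x^j\in(\Oc_K/\pf\Oc_K)[x;\sigma]$. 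Since $\theta$ reverses products and $x\,r(x)\,x^{-1}=\sigma(r(x))$ inside the localisation, one checks that, whenever $p$ and $q$ have unit leading coefficients (so $\deg pq=\deg p+\deg q$),
\[
(pq)^\dagger=\sigma^{\deg p}(q^\dagger)\,p^\dagger .
\]
Because $h(x)$ is monic of degree $k$, the polynomial $g^\perp(x)$ of the statement is exactly $h^\dagger(x)$.

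Next I would apply $\dagger$ to the identity $h(x)g(x)=x^n-u$, recalling that $g(x)$ is taken monic and that $h_0$ is a unit (from $h_0g_0=-u$, with $u^2=1$ forcing $u=\pm1$). Since $u\in F$ is fixed by $\sigma$, hence by the induced map on $\Oc_K/\pf\Oc_K$, one computes $(x^n-u)^\dagger=1-\sigma^n(u)x^n=1-ux^n$, which by $u^2=1$ equals $-u(x^n-u)$. Therefore
\[
-u\,(x^n-u)=\sigma^{k}(g^\dagger(x))\,g^\perp(x),
\]
so $g^\perp(x)$ is a right divisor of $x^n-u$, with unit leading coefficient $\sigma^k(h_0)$. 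By the Corollary of Section~\ref{sec:code}, $(g^\perp(x))/(x^n-u)$ is then a $\sigma$-constacyclic code of length $n$ and dimension $n-k$, free of rank $n-k$ over $R:=\Oc_K/\pf\Oc_K$; and $-u$ being a central unit, $-u\,g^\perp(x)$ generates the same code.

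It remains to identify this code with $\Cc^\perp$. For the inclusion $(g^\perp)/(x^n-u)\subseteq\Cc^\perp$ I would pair the spanning vectors of the two codes: the coefficient vector of $x^a g(x)$ for $0\le a\le k-1$ (these span $\Cc$) against that of $x^b g^\perp(x)=x^b h^\dagger(x)$ for $0\le b\le n-k-1$ (these span $(g^\perp)/(x^n-u)$). Expanding and using that $\sigma$ is an automorphism, this pairing equals $\sigma^a$ applied to the coefficient of $x^{\,k+b-a}$ in $g(x)h(x)=x^n-u$ — the identity $g(x)h(x)=x^n-u$ being Proposition~\ref{prop:parity} — and since $1\le k+b-a\le n-1$ throughout this index range, that coefficient vanishes. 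To upgrade the inclusion to an equality I would invoke that $R$ is a finite Frobenius ring (a finite product of finite chain rings), so the standard Euclidean pairing on $R^n$ is non-degenerate and $|\Cc|\cdot|\Cc^\perp|=|R|^{n}$; as $\Cc$ is free of rank $k$ this gives $|\Cc^\perp|=|R|^{n-k}=|(g^\perp)/(x^n-u)|$, and a subgroup of equal finite cardinality must coincide with it.

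The delicate part is the middle step: making the reciprocal operation interact correctly with products — in particular keeping track of the degree-shift twist $\sigma^{\deg p}$ — and then pinning down $(x^n-u)^\dagger=-u(x^n-u)$, which is precisely where the hypotheses $\sigma|_F=\mathrm{id}$ and $u^2=1$ enter. The orthogonality computation is routine once the indices are arranged, and the cardinality count cleanly handles the fact that $R$ need not be a field.
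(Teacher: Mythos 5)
Your proposal is correct and follows essentially the same route as the paper: form the reciprocal $g^\perp(x)=x^k\theta(h(x))$ via the localization anti-isomorphism $\theta$, use $u^2=1$ and $\sigma|_F=\mathrm{id}$ to get $x^k\theta(h)\theta(g)x^{n-k}=-u(x^n-u)$ and hence right-divisibility, and then check orthogonality by identifying the Euclidean pairings with the vanishing coefficients of $x^1,\dots,x^{n-1}$ in a product with $h(x)$ (you pair the generators $x^ag$ with $x^bg^\perp$ and read off coefficients of $g(x)h(x)$; the paper pairs an arbitrary codeword $a$ with the $d_l=x^lg^\perp$ and reads off coefficients of $a(x)h(x)$ --- the computations are equivalent). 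The one place you go beyond the paper is the final identification of $(g^\perp(x))/(x^n-u)$ with all of $\Cc^\perp$: the paper only proves that the $d_l$ span a free rank-$(n-k)$ submodule contained in $\Cc^\perp$ and stops, whereas your cardinality argument $|\Cc|\cdot|\Cc^\perp|=|\Oc_K/\pf\Oc_K|^n$ (valid because $\Oc_K/\pf\Oc_K$ is a finite product of finite chain rings, hence Frobenius) actually closes that step; this is a genuine improvement rather than a deviation.
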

\begin{proof}
To show that $g^\perp(x)/(x^n-u)$ is a $\sigma$-constacyclic code, we need to show that $g^\perp(x)$ is a right divisor of $x^n-u$. We know that $g(x)$ is, namely there exists a polynomial $h(x)$ such that
\[
g(x)h(x)=h(x)g(x)=x^n-u.
\]
Then, using the anti-isomorphism $\theta: (\Oc_K/\pf\Oc_K)[x;\sigma] \rightarrow A,~\sum_{i=0}^da_ix^i \mapsto \sum_{i=0}^d x^{-i}a_i$ of the previous proposition:
\begin{eqnarray*}
x^{k}\theta(h(x))\theta(g(x))x^{n-k}&=& x^{k}\theta(g(x)h(x))x^{n-k}\\
&=&x^{k}\theta(x^n-u)x^{n-k} \\
&=& x^{k}(x^{-n}-u)x^{n-k} \\
&=& 1-ux^n = -u (x^n-\frac{1}{u}) =-u (x^n-u),
\end{eqnarray*}
since $u^2=1$. Therefore $x^k\theta(h(x))$ is a left divisor of $x^n-u$, and
\[
x^k\theta(h(x)) = x^k\theta(\sum_{i=0}^{k-1} h_ix^i+x^k)
=x^k(\sum_{i=0}^{k-1} x^{-i} h_i+x^{-k})  = \sum_{i=0}^{k-1}\sigma^i(h_i) x^{k-i}+1.
\]
By the first part of Proposition \ref{prop:parity}, $x^k\theta(h(x))$ is also a right divisor of $x^n-u$.
Now that we have shown that $g^\perp(x)$ is indeed a $\sigma$-constacyclic code, 
we are left to show that it is the dual of $\Cc$.\\

Let $a(x)=\sum_{i=0}^{n-1}a_ix^i$ be the polynomial corresponding to the codeword $a=(a_0,a_1,\ldots,a_{n-1})$. Then we denote, for any $0\le l\le n-k-1$, by $d_l(x)$ the polynomial $x^lg^{\perp}(x)$, which corresponds to the codeword :
\[
d_l=( 0_{1\times l},\sigma^l(h_k),\sigma^{l+1}(h_{k-1}),\ldots,\sigma^{l+k}(h_0),0_{1\times(n-k-l+1)}),~
l=0,\ldots,n-k-1.
\]

Hence, $a$ is orthogonal to the $n-k$ vectors $d_l$ ($0\le l\le n-k-1$): indeed, note that
\[
d_l a^T=\sum_{i=l}^{l+k}a_i\sigma^i(h_{l+k-i})
\]
which turns out to be the coefficient of $x^{k+l}$ in $a(x)h(x)$, $l=0,\ldots,n-k-1$. Since $\deg h(x)=k$, $\deg(a(x)h(x))<n+k$, say $a(x)h(x)=p(x)=\sum_{i=0}^{n+k-1}p_ix^i$, which means that modulo $x^n-u$, $$a(x)h(x)=\sum_{i=0}^{n-1}p_ix^i+\sum_{i=n}^{n+k-1}p_ix^i=\sum_{i=0}^{n-1}p_ix^i+\sum_{j=0}^{k-1}p_{j+n}x^{j}(-u).$$ 

Moreover, since $a\in \mathcal C$, by Proposition 1, we have $a(x)h(x)=0$ modulo $x^n-u$, implying that the coefficients of $x^{k+l}$, $l=0,\ldots,n-k-1$, are equal to zero. This proves the orthogonality of $d_l$ with respect to $a$.

Therefore, the $\Oc_K/\mathfrak p\Oc_K$-module generated by the codewords $d_l$ ($0\le l\le n-k-1$) is contained in $\mathcal C^\perp$.

We now show that the module $\Oc_K/\mathfrak p\Oc_K[d_0,...,d_{n-k-1}]$ has rank $n-k$. Indeed, let $\lambda_0,...,\lambda_{n-k-1}\in \Oc_K/\mathfrak p\Oc_K$ such that $\underset{l=0}{\overset{n-k-1}{\sum}} \lambda_ld_l=0$.  The first coordinate of $\underset{l=0}{\overset{n-k-1}{\sum}} \lambda_ld_l$ is $\lambda_0h_k=\lambda_0$, so $\lambda_0=0$. Then $\underset{l=1}{\overset{n-k-1}{\sum}} \lambda_l d_l=0$ and by the same argument, we prove successively that $\lambda_1=0$, ..., $\lambda_{n-k-1}=0$. This proves that the $n-k$ vectors $d_0,...,d_{n-k-1}$ are linearly independent over $\Oc_K/\mathfrak p\Oc_K$.
\end{proof}

We remark that in our context of cyclic division $F$-algebras, $u$ is restricted to $F$, therefore the polynomial $x^n - u \in (\Oc_K/p\Oc_K)[x;\sigma]$ actually has a constant term $u$ which belongs to $\Oc_F /\pf\Oc_F$. Therefore the condition $u^2 = 1$ means that $u$ is of order $2$ in a finite field, that is $u = \pm 1$. This is however not the case more generally.

\begin{example}\label{ex:p3_2}\rm
Let us continue Example \ref{ex:p3}.
We have $x^2+1=(x+\alpha+1)(x+\alpha-1)$ therefore the parity check polynomial is 
$h(x)=x+\alpha-1$. From Proposition \ref{prop:gperp}, we have that $g^\perp(x)=1+\sigma(h_0)x=1+\sigma(\alpha-1)x=1-(\alpha+1)x$. Now left multiple of $g^\perp(x)$ modulo $x^2+1$ are of the form $(b_0+b_1x)(x-(\alpha+1))=[b_0-b_1(\alpha-1)]+[b_0(-\alpha-1)+b_1]x$, so that codewords are 
\[
(a_0,a_1)=(a_0,(-\alpha-1)a_0).
\] 
These codewords clearly form the dual code of $\Cc$, since 
$\langle (a_0,(-\alpha-1)a_0) , y \rangle =0$ for all $y \in \Cc$.
\end{example}

\begin{example}\rm
For Example \ref{ex:p5}, we have $((1,1)+(2,2)x)((1,1)-(2,2)x)=(1,1)+(1,1)x^2$ or equivalently $((3,3)+(1,1)x)((2,2)-(4,4)x)=(1,1)+(1,1)x^2$ with parity check polynomial $h(x)=(2,2)+(1,1)x$. From Proposition \ref{prop:gperp}, we have that $g^\perp(x)=1+\sigma(h_0)x=1+\sigma((2,2))x=1+(2,2)x$. Now left multiple of $g^\perp(x)$ modulo $x^2+1$ are of the form $(b_0+b_1x)(1+(2,2)x)=[b_0-b_1(2,2)]+[b_0(2,2)+b_1]x$, so that codewords are 
\[
(a_0,a_1)=(a_0,(2,2)a_0),~a_0 \in \FF_5 \times \FF_5,
\]
which form the dual code of $\Cc$, since 
$\langle (a_0,(2,2)a_0) , y \rangle =0$ for all $y \in \Cc$.
This also shows that $\Cc$ is self-dual.
\end{example}

Finally we provide another example of self-dual code.

\begin{example}\label{ex:sd}\rm
Let $K=\mathbb Q(\sqrt{2})$ and $F=\mathbb Q$. Then $\Oc_F=\mathbb Z$ and $\Oc_K=\mathbb Z[\sqrt{2}]$. 
Let $\mathfrak Q$ be the quaternion division algebra defined by 
\[
\mathfrak Q=\mathbb Q(\sqrt{2})\oplus \mathbb Q(\sqrt{2})e,
\] 
with $e^2=-5$. Since $N_{K/F}(a+\sqrt{2}b)=a^2+2b^2$, $a,b\in\ZZ$, $-5$ cannot be a norm and $\mathfrak Q$ is indeed a quaternion division algebra. 
We set $\Lambda=\mathbb Z[\sqrt{2}]\oplus \mathbb Z[\sqrt{2}]e$.

Set $p=3$, which remains inert in $\mathbb Q(\sqrt{2})$. Hence, $\mathbb Z[\sqrt{2}]/3\mathbb Z[i]\simeq \FF_9$ and $\Lambda/3\Lambda \simeq \FF_9 \oplus \FF_9 e$. Take $\Ic=(\sqrt{2}+e)\Lambda$. Then $\Ic$ contains $3$ since the norm of $\sqrt{2}+e$ is $N(\sqrt{2}+e)=(\sqrt{2}+e)(-\sqrt{2}-e)=3$. Let $\alpha$ denote a primitive root of $\FF_9$ over $\FF_3$, satisfying $\alpha^2+1=0$ and as before, $\sigma(\alpha)=\alpha^3$. We have 
\[
\psi((\sqrt{2}+e) {\rm{ mod }} 3)=\alpha+x,
\] 
which is a right divisor of $x^2+5= x^2+2$ in $\FF_9[x;\sigma]$: 
\[
(x+\alpha)(x+\alpha)=x^2+\sigma(\alpha)x+(\alpha)x+\alpha^2=x^2+2.
\] 
Therefore, the left ideal $(x+\alpha)\FF_9[x;\sigma]/(x^2+2)$ is a $\sigma$-constacyclic code of length $n=2$ and dimension $1$. 
Left multiples of $x+\alpha$ modulo $x^2+2$ are explicitly given by $(b_0+b_1x)(x+\alpha)=[b_0(\alpha)+b_1]+[b_0+b_1(-\alpha)]x$ and codewords are of the form
\[
(a_0,a_1)=(\alpha a_1, a_1),~a_1 \in \FF_9.
\]
This code is self-dual, since $u=2$ satisfies $u^2=1$ and $g^\perp(x)=1+\sigma(\alpha)x=1-\alpha x=-\alpha(\alpha+x)$.
Taking the pre-image by $\psi$, it corresponds to the left-ideal $\Ic/3\Lambda$, with $\Ic=(\sqrt{2}+e)\Lambda $.

\end{example}

\subsection{Lattices from Dual Codes}

Recall the proposed variation of Construction A. Given the map
\[
\rho: \Lambda \rightarrow \psi(\Lambda/\pf\Lambda)=\FF_q[x;\sigma]/(x^n-u),
\] 
compositum of the canonical projection $\Lambda\rightarrow \Lambda/\pf\Lambda$ with $\psi$, we obtained a lattice $L$ given by    
\[
L= \rho^{-1}(\Cc)=\Ic.
\]
Now let $\Cc^{\perp}$ be the dual of $\Cc$. Then
\[
L'=\rho^{-1}(\Cc^\perp)
\]
is also a lattice.

\begin{example}\rm
Let us continue Examples \ref{ex:p3} and \ref{ex:p3_2}.
To the code $\Cc$ given by the left multiples of $(x+1+\alpha) \mod x^2+1$ corresponds the lattice $\rho^{-1}(\Cc)=(1+i+e)\Lambda$. 
The dual lattice $\Cc^\perp$ is generated by the left multiples of $1-(\alpha+1)x \mod x^2+1$ to which corresponds the lattice $\rho^{-1}(\Cc^\perp)=(1-(i+1)e)\Lambda$. 
\end{example}

We make an obvious observation.

\begin{lemma} 
If $\mathcal C\subset \mathcal C^\perp$, we have the inclusion $L\subset L'$. 
\end{lemma}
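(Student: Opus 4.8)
The plan is to observe that this is an immediate consequence of the fact that taking preimages under a fixed map is monotone with respect to inclusion. Concretely, recall from the construction of Section~\ref{sec:code} that $L = \rho^{-1}(\Cc)$ and $L' = \rho^{-1}(\Cc^\perp)$ for one and the same map $\rho: \Lambda \rightarrow \FF_q[x;\sigma]/(x^n-u)$. No further structure of $\rho$ (ring homomorphism, $\FF_q$-linearity) is needed for the statement: it holds purely set-theoretically.

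First I would take an arbitrary element $y \in L = \rho^{-1}(\Cc)$. By definition of the preimage this means $\rho(y) \in \Cc$. Next I would invoke the hypothesis $\Cc \subseteq \Cc^\perp$, which gives $\rho(y) \in \Cc^\perp$. Finally, again by definition of the preimage, $\rho(y) \in \Cc^\perp$ says exactly that $y \in \rho^{-1}(\Cc^\perp) = L'$. Since $y \in L$ was arbitrary, we conclude $L \subseteq L'$.

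There is essentially no obstacle here; the only thing worth flagging is that one should make explicit that $L$ and $L'$ are preimages under the \emph{same} map $\rho$, so that the monotonicity of $\rho^{-1}(\cdot)$ applies directly. The fact that $L$ and $L'$ are genuinely lattices (discrete subgroups of full rank, as noted in Section~\ref{sec:code}) plays no role in the inclusion itself and may simply be cited. Hence the proof reduces to the one-line observation above, consistent with the lemma being stated as an ``obvious observation''.

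\begin{proof}
Since $L = \rho^{-1}(\Cc)$ and $L' = \rho^{-1}(\Cc^\perp)$ for the same map $\rho$, and since the operation of taking preimages under a map preserves inclusions, the hypothesis $\Cc \subseteq \Cc^\perp$ immediately yields $L = \rho^{-1}(\Cc) \subseteq \rho^{-1}(\Cc^\perp) = L'$. Explicitly, if $y \in L$ then $\rho(y) \in \Cc \subseteq \Cc^\perp$, hence $y \in \rho^{-1}(\Cc^\perp) = L'$.
\end{proof}
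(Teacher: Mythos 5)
Your proof is correct and is essentially identical to the paper's own argument: both take an element of $L$, push it forward by $\rho$ into $\Cc$, use the hypothesis $\Cc \subseteq \Cc^\perp$, and pull back to conclude membership in $L'$. Your additional remarks about monotonicity of preimages and the irrelevance of the lattice structure are accurate but add nothing beyond the paper's one-line proof.
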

\begin{proof} 
Let $x\in L$. Then $\rho(x) \in \mathcal C$, so $\rho(x)\in C^\perp$ and $x\in L'$. 
\end{proof}

%************************************************************************************%
%
%
%
%***********************************************************************************%

\section{Application to Space-time Codes}
\label{sec:matrix}

We conclude by discussing our motivation to look at the question of defining a variation of Construction A over a cyclic division $F$-algebra, which comes from space-time coding.

%***********************************************************************************%
\subsection{Space-Time Coding}
Cyclic division algebras are by now classically used to design space-time codes \cite{SSV,BO}. 
Unlike in classical coding theory where codewords are typically vectors with coefficients over finite fields (or finite rings), in the context of space-time coding, codewords are matrices with coefficients coming from number fields, and matrix codewords are obtained as follows. To make the notation easier, we assume for the rest of this section that $u\in \Oc_F$.
To any element $a= a_0+a_1e+\cdots +a_{n-1}e^{n-1}$ of $\Lambda$, we can associate a matrix in ${\rm{Mat}}_n(\Oc_K)$ (since $u\in \Oc_F$) by :
\[ 
M(a)= 
\begin{bmatrix}
a_0 & u\sigma(a_{n-1}) & u\sigma^2(a_{n-2}) & \cdots & u \sigma^{n-1}(a_1) \\ 
a_1 &    \sigma(a_{0}) & u\sigma^2(a_{n-1}) & \cdots & u \sigma^{n-1}(a_2) \\ 
\vdots & \vdots & \vdots & \ddots & \vdots \\ 
\vdots & \vdots & \vdots & & u\sigma^{n-1}(a_{n-1}) \\ 
a_{n-1} & \sigma(a_{n-2}) & \sigma^2(a_{n-3}) & \cdots &  \sigma^{n-1}(a_0) \end{bmatrix}.\]

The map \[\begin{aligned} \Lambda &\rightarrow {\rm{Mat}}_n(\Oc_K)\\ a & \mapsto M(a)\end{aligned}\] is an $\Oc_K$-algebra injective homomorphism. 
A space-time code is thus given by $\{ M(a),~a\in \Lambda\}$. The condition that the $F$-algebra is division is critical, it fulfills one design criterion for space-time codes, namely that $\det(M(a)-M(a'))\neq 0$, $a \neq a'$.

We illustrate this by continuing Example \ref{ex:p3}.
\begin{example}\rm
For $q=a+be$ in the natural order $\ZZ[i]\oplus\ZZ[i]e$ of the quaternion algebra $\mathfrak{Q}$, $a,b \in \ZZ[i]$
\[
M(q)=
\begin{bmatrix}
a  & -\bar{b} \\
b  & \bar{a}\\
\end{bmatrix}
\]
where $\bar{\cdot}$ is the non-trivial Galois automorphism of $\QQ(i)/\QQ$.
Let $t=(a+be)(1+i+e)$ be an element of $\Ic=\Lambda(1+i+e)$ (with $a,b\in \mathbb Z[i]$). Then 
\[t=a(1+i)-b + (a+b(1-i))e.\]
Hence, 
\[
M(t)=\begin{bmatrix} a(1+i)-b & -(\overline{a} +\overline{b}(1+i))\\ a+b(1-i) & \overline{a}(1-i)-\overline{b} \end{bmatrix},
\]
and we obtain the space-time code 
\[
\left\{
\begin{bmatrix} a(1+i)-b & -(\overline{a} +\overline{b}(1+i))\\ a+b(1-i) & \overline{a}(1-i)-\overline{b} \end{bmatrix},~a,b \in \ZZ[i]
\right\}.
\]

Note that $\Ic=\rho^{-1}(C)$ is a real lattice with rank $4$. 
\end{example}

%***********************************************************************************%
\subsection{Coset Coding.}
Let now $v=(v_1,\ldots,v_n)$ be an information vector containing the data to be transmitted, which should be mapped to a lattice point in $L$, where $L$ is used as a lattice code. Mapping a vector with coefficients in a finite ring to a lattice point is not an easy task, and can be facilitated by the use of Construction A.
The lattice $L=\rho^{-1}(\Cc)=\Ic \Lambda$ may by construction be written 
as a union of cosets of $\pf\Lambda$, where each coset representative may be chosen to be a codeword in the code $\Cc$. 
Namely, if $g(x)$ is a right divisor of $x^n-u$ and if a $\sigma$-constacyclic code $\Cc=(g(x))/(x^n-u)\subset (\Oc_K/\pf\Oc_K)[x;\sigma]/(x^n-u)$ has dimension $k=n-\deg(g)$, since 
\[
\Lambda/\pf\Lambda \cong (\Oc_K/\pf\Oc_K)[x;\sigma]/(x^n-u)
\] there is an isomorphism 
\[
\Ic/\pf\Lambda \cong \Cc.
\] 
This allows us to associate in a unique way a coset of $\pf\Lambda$ to a codeword.
The mapping from $v$ to a point in $L$ may be done by attributing some information coefficients $v_1,\ldots,v_k$ to be encoded using the code $\Cc$, and the rest of the information coefficients to be mapped to a point in the lattice $\pf\Lambda$.
This simplifies the encoding in the cases where $\pf\Lambda$ is a lattice ``easier" to label than $L$, which happens in the original Construction A where this lattice turns out to be a scaled version of $\ZZ^n$, but also in many cases where $\pf$ gives rise to a lattice isomorphic to $\ZZ^n$ (or another lattice whose points are easy to label).

Irrespectively of the lattice obtained from $\pf\Lambda$, coset encoding is necessary in the context of wiretap codes. In a wiretap context, a code should not only provide reliability between a transmitter and a receiver, but also ensure confidentiality, should an eavesdropper try to intercept the message. To protect from wiretapping, information symbols are mapped to a codeword in $\Cc$, while random symbols are picked uniformly at random in the lattice $\pf\Lambda$ to create confusion at the eavesdropper. Wiretap space-time codes have been studied in \cite{OB}, where coset encoding of space-time codes is assumed. However, no concrete way to do so was proposed.
The construction of the lattice $L=\rho^{-1}(\Cc)=\Ic$ as presented in this work thus enables coset encoding for wiretap space-time codes.

%*****************************************************************%
%
%
%
%*****************************************************************%
\section{Future Work}

In this paper, we presented a construction of lattices from constacyclic codes over skew-polynomials, which can be interpreted as a variation of the well known Construction A of lattices from linear codes. The starting point was the quotient $\Lambda/\pf\Lambda$ of a given order $\Lambda$ from a cyclic division $F$-algebra, which is motivated by applications of such algebras to space-time coding.
Natural future research directions include:
\begin{itemize}
\item
To continue the study of constacyclic codes over $(\Oc_K/\pf\Oc_K)[x;\sigma]/(x^n-u)$, in particular by replacing the polynomial $x^n-u$ with a more general polynomial $f(x)$. This may include looking at the characterization of self-dual codes. One could alternatively consider duality with respect to a Hermitian inner product.
\item
Linking the properties of the constacyclic code $\Cc$ to that of the lattice $L=\rho^{-1}(\Cc)$: there are standard duality results for the classical Construction A, relating the dual lattice of $L^*$ with $L'$, or the weight enumerator of the code with the theta series of the lattice, which have not yet been considered. This leads to further properties of lattices that could be explored, such as extremality and modularity.
\item
Design of wiretap space-time codes: this consists of choosing the cyclic division $F$-algebra $A$, the corresponding two-sided ideal $\Ic$ and constacyclic code $\Cc$, to optimize the confusion at the eavesdropper.
\end{itemize}

%***************************************************************************************%
%
% REFS
%
%***************************************************************************************%

\end{document}